\newtheorem{lemma}{Lemma}
\newtheorem{Theorem}{Theorem}
\definecolor{lime}{HTML}{A6CE39}
\DeclareRobustCommand{\orcidicon}{
	\begin{tikzpicture}
		\draw[lime, fill=lime] (0,0)
		circle[radius=0.16]
		node[white]{{\fontfamily{qag}\selectfont \tiny \.{I}D}};
	\end{tikzpicture}
	\hspace{-2mm}
}
\xdef\csname orcid\x\endcsname{\noexpand\href{https://orcid.org/\csname orcidauthor\x\endcsname}{\noexpand\orcidicon}}
\begin{document}

\title{Why Shape Coding? Asymptotic Analysis of the Entropy Rate for Digital Images}

\author{Gangtao Xin\hspace{-2.25mm}\orcidA{}\hspace{-2.25mm}, Pingyi Fan\hspace{-2.25mm}\orcidB{}\hspace{-2.25mm},~\IEEEmembership{Senior Member,~IEEE} \\
	and Khaled B. Letaief\hspace{-2.25mm}\orcidC{}\hspace{-2.25mm}, ~\IEEEmembership{Fellow,~IEEE}

\thanks{This work was supported by the National Key Research and Development Program of China (Grant NO.2021YFA1000500(4)) }
\thanks{Gangtao Xin and Pingyi Fan are with the Department of Electronic Engineering, Tsinghua University, Beijing 100084, China, and also with the Beijing National Research Center for Information Science and Technology, Tsinghua University, Beijing 100084, China (e-mail:fpy@tsinghua.edu.cn)}
\thanks{Khaled Ben Letaief is with the School of Engineering, The Hong Kong University of Science and Technology, Hong Kong}
}

\markboth{Journal of \LaTeX\ Class Files,~Vol.~14, No.~8, August~2021}%
{Shell \MakeLowercase{\textit{et al.}}: A Sample Article Using IEEEtran.cls for IEEE Journals}

\IEEEpubid{0000--0000/00\$00.00~\copyright~2021 IEEE}

\maketitle

\begin{abstract}
This paper focuses on the limit theory of image compression. It proves that for an image source, there exists a coding method with shapes that can achieve the entropy rate under a certain condition where the shape-pixel ratio in the encoder/decoder is $\mathcal{O}({1 / {\log t}})$. Based on the new finding, an image coding framework with shapes is proposed and proved to be asymptotically optimal for stationary and ergodic processes. Moreover, the condition $\mathcal{O}({1 / {\log t}})$ of shape-pixel ratio in the encoder/decoder has been confirmed in the image database MNIST, which illustrates the soft compression with shape coding is a near-optimal scheme for lossless compression of images.

\end{abstract}

\begin{IEEEkeywords}
Image compression; Information theory; Entropy rate; Limit theorem; Asymptotic bounds
\end{IEEEkeywords}

\section{Introduction and overview}

\IEEEPARstart{O}{ne} of Shannon's outstanding achievements in source coding is pointing out the data compression limit. This result has been widely and successfully applied in stream data compression. But image compression is still a challenging issue. This paper is an attempt to analyze the limit theory of image compression.

\subsection{Preliminaries}
Data compression is one of the bases of digital communications, being used to provide efficient and low-cost communication services. Image is the most important and popular medium in the current information age. Hence, image compression is naturally an indispensable part of data compression \cite{9376651}. Moreover, its coding efficiency directly affects the objective quality of the communication network and the subjective experiences of users. 

As a compression method with strict requirements, lossless image coding focuses on reducing the required number of bits to represent an image without losing quality. It guarantees to cut down the occupation of communication and storage resources as much as possible under certain system or scenario constraints. In the area of big data, image lossless coding may play a more significant role in applications where errors can not be allowed, such as in intelligent medical treatment, digital libraries, semantic communications \cite{choi2022semantic, xin2022exk}, and metaverse in the future.

Entropy rate is one of the important metrics in information theory, which extends the meaning of entropy from a random variable to a random process. It also characterizes the generalized asymptotic equipartition property of a stochastic process. In this paper, we shall employ entropy rate to explain the best achievable data compression. It is well known that the entropy rate of a stochastic process $\{Z_i\}$ is defined as
\begin{equation}
	\label{Eq:entropy rate-1}
	H(\mathcal{Z}) = \lim_{t\to \infty}  \sup {1 \over t}  H(Z_1,Z_2,...,Z_t).
\end{equation}
If the limit exists, then $H(\mathcal{Z})$ is the per symbol entropy of the $t$ random variables, reflecting how the entropy of the sequence increases with $t$. Moreover, the entropy rate can also be defined as
\begin{equation}
	\label{Eq:entropy rate-2}
	H'(\mathcal{Z})= \lim_{t \to \infty} H(Z_t|Z_{t-1},Z_{t-2},...,Z_1).
\end{equation}
$H'(\mathcal{Z})$ is the conditional entropy of the last random variable given all the past random variables. For a stationary stochastic process, the limits in Eq. (\ref{Eq:entropy rate-1}) and (\ref{Eq:entropy rate-2}) exist and are equal \cite{cover1999elements}. That is, $H(\mathcal{Z})$ = $H'(\mathcal{Z})$.
In addition, for a stationary Markov chain, the entropy rate is
\begin{align}
	\label{Eq:markov-1}
	H(\mathcal{Z})&=H'(\mathcal{Z})=\lim_{t\to \infty} H(Z_t|Z_{t-1},...,Z_1)  \\
	&=\lim_{t \to \infty} H(Z_t|Z_{t-1}).
	\label{Eq:markov-2}
\end{align}
The entropy rate is a long-term sequence metric. Even if the initial distribution of the Markov chain is not a stable distribution, it will still tend to converge as in Eq. (\ref{Eq:markov-1}) and (\ref{Eq:markov-2}).
Moreover, for a general ergodic source, the Shannon-McMillan-Breiman theorem points out its asymptotic equipartition property. If $\{Z_i\}$ is a  finite-valued stationary ergodic process, then 
\begin{equation}
	-{1 \over t} \log p(Z_0,...,Z_{t-1}) \to H(\mathcal{Z})  \text{  with probability 1}.
\end{equation}
This indicates the convergence relationship between the joint probability density and entropy rate for the general ergodic process. Following a similar idea as that of the analysis of entropy rate, we investigate the asymptotic property of shape-based coding for stationary image ergodic processes.

\subsection{Shape Coding}

A digital image is composed of lots of pixels arranged in order. This form is fixed and if the size of an image is determined, the number and arrangement mode of pixels are also determined. On the other hand, shape coding extends the basic components of images from pixels to shapes, which is a more flexible coding method and may efficiently utilize image embedding structures. Furthermore, it will no longer limit the number and position of shapes. Shape coding has three main characteristics: (1) The image is formed by filling shapes; (2) The position arrangement of shapes changes from a fixed mode to a random variable; (3) The shape database and codebook are generated in a data-driven way, which clearly contains more inherent features of image databases.

Consider a binary digital image $Z$, whose length and width are $M$ and $N$, respectively, then the total number of pixels is $t=M\times N$. Suppose it is divided into $c(t)$ shapes $\{s_1,s_2,...,s_{c(t)}\}$, where $s_i$ is the $i$-th shape. We use $\mathcal{D}$ to denote the shape database and $F_i(S_i),i=1,...,T$ to represent filling an image with shape $S_i$ at position $(x_i,y_i)$ in the $i$-th operation. The image with shape coding can be described as \cite{xin2020soft}
\begin{align}
	\min&~~\sum_{i=1}^{c(t)} [l(s_i)+l_p(x_i,y_i)] \\
	&\text{s.t.}~~Z=\sum_{i=1}^{c(t)}F_i(s_i),
\end{align}
where $l(s_i)$ and $l_p(x_i,y_i)$ represent the bit length of the shape $s_i$ and its corresponding location at $(x_i,y_i)$, respectively. The constraint condition indicates that the binary image $Z$ can be reconstructed through $c(t)$ filling operations, which is exactly the same as the original image. On this premise, shape coding tries to reduce the cost of representing an image as much as possible.

The codebook plays an important role in shape coding. It reflects the statistical characteristics and correlation of the data source. Fig. \ref{Fig:framework} illustrates the structure of shape coding. It consists of two parts: the generation and use of the codebook. On the one hand, one searches and matches the shape of images in the dataset through a data-driven method. At the same time, the frequency statistical analysis is carried out to generate a shape database. On the other hand, the codebook can be used repeatedly in communication and storage tasks to reduce the occupation of resources. The transmitter/compressor encodes the original image with the codebook. After transmission or storage through the channel/storage medium, the receiver/decompressor can decode the compressed file with the same codebook. In this way, one can completely reconstruct the original image.

\begin{figure*}
	\centerline{\includegraphics[width=6.5in]{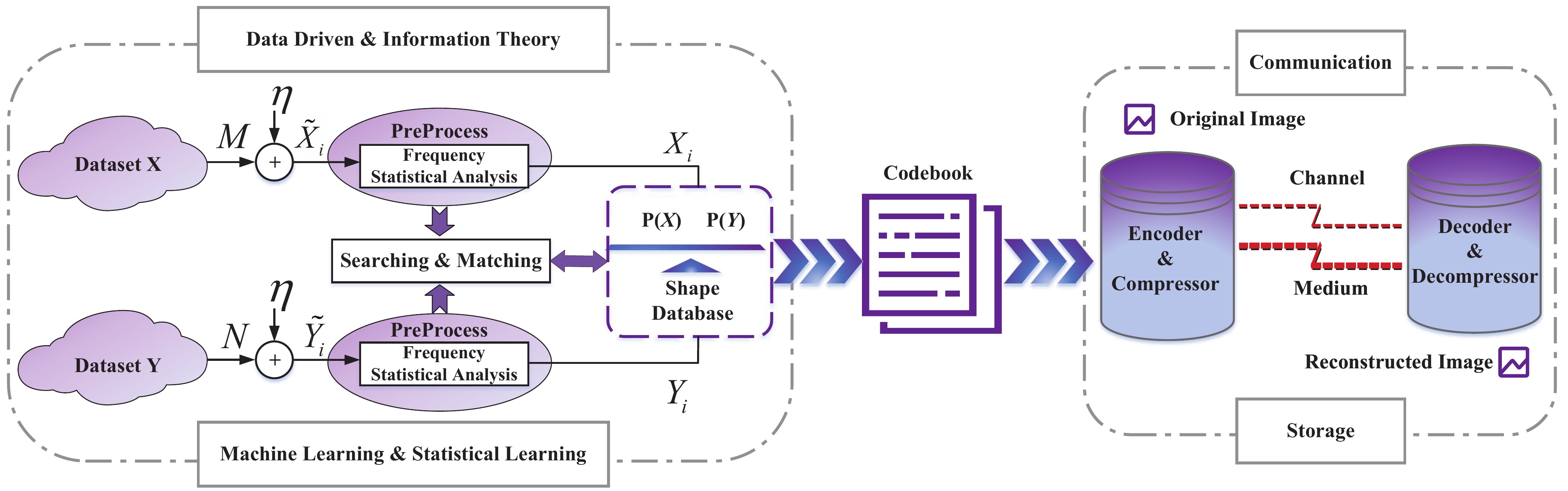}}
	\caption{The structure of shape coding. It consists of two parts: the generation and use of the codebook. The former makes use of the characteristics of the data source, while the latter improves the compression efficiency.}
	\label{Fig:framework}
\end{figure*}

\subsection{Relations to Previous Work}
The objective of this work is to present the performance limits from the viewpoint of information theory, which is related to our previous works in \cite{xin2020soft,xin2021soft,xin2021lossless}. An image encoding method through shapes and data-driven methods can improve image lossless compression. In some known databases, \emph{soft compression} outperforms the most popular methods such as PNG, JPEG2000, and JPEG-LS. However, there was no theoretical support for how shape-based \emph{soft compression} methods can reach the performance limit. That is, the gap between soft compression and its compression limit, namely the entropy rate is not theoretically known. On the other hand, the entropy rate associated with the asymptotic equipartition property analysis of images can help us design efficient encoding and decoding algorithms from the viewpoint of Shannon's information theory.

The earliest multi-pixel joint coding method can be traced back to Symbol-Based coding \cite{ascher1974a}, which transmits or stores only the first instance of each pattern class and thereafter substitutes this exemplar for every subsequent occurrence of the symbols. It achieved a degree of bandwidth reduction on a scan-digitized printed text. Fractal theory \cite{jacquin1992image,8683203} is also related to block-based coding. Fractal block coding approximates an original image by relying on the assumption that image redundancy can be efficiently exploited through self-transformability on a blockwise basis. On the other hand, \emph{soft compression} generates the shape database in a data-driven manner, so as to make the codebook used in the encoder and decoder. Image processing-based data-driven methods such as \cite{begaint2017region,zhang2020image,chen2019toward} can explore the essential features of images and even eliminate semantic redundancy. The method of using side information to assist data compression has also been used and analyzed by Kieffer \cite{yang2001universal} and Kontoyiannis \cite{gavalakis2021fundamental}. Verd\'{u} \cite{8283787} provided upper and lower bounds on the optimal guessing moments of a random variable by taking values on a finite set when the side information may be available. Rycht\'{a}rikov\'{a} \emph{et al.} \cite{rychtarikova2016point} generalized the point information gain and derived point information gain entropy, which may help analyze the entropy rate of an image.

Another connection to this paper is the Lempel-Ziv coding schemes \cite{ziv1978compression}. It proposed the concept of compressibility. For every individual infinite sequence $x$, a quantity $\rho(x)$ is defined. It is shown to be the asymptotically attainable lower bound on the compression ratio that can be achieved for $x$ by any finite-state encoder. Wyner \cite{wyner1989some} derived theorems concerning the entropy of a stationary ergodic information source and used the results to obtain insight into the workings of the Lempel-Ziv data compression algorithm. 

The main contribution of this paper is that we present a sufficient condition that allows us to show that the performance limit of shape-based image coding can be asymptotically achievable in terms of entropy rate.

\subsection{Paper Outline}

The rest of this paper is organized as follows. Section \ref{sec:asymptotic} contains our main results, giving asymptotic properties on shape-based image coding in terms of entropy rate. Moreover, it indicates the relationship between the number of shapes and coding performance. In Section \ref{sec:numerical}, we present sample numerical results with concrete examples. In Section \ref{sec:conclusion}, we give some complementary remarks and conclude this paper.

\section{The Asymptotic Properties of Image Sources Composed of Shapes}
\label{sec:asymptotic}

The encoding method with shapes can take advantage of the characteristics of the data and eliminate redundancy in the spatial and coding domains simultaneously. This section theoretically analyzes the performance of image coding with shapes. It will show that when the numbers of shapes and pixels have a reciprocal logarithm relationship, the average code length will asymptotically approach the entropy rate. To the best of our knowledge, this is the first result of image compression in information theory. The framework of this proof is similar to \cite{cover1999elements} \cite{wyner1989some}, but there are some important differences. 

The average number of bits needed to represent the image $Z$ with shapes is $B_Z$. Specifically,
\begin{align}
	\label{Eq:Bz-1}
	B_{Z} & ={1 \over t} \sum_{i=1}^{c(t)} [l(s_i)+l_p(x_i,y_i)] \\
	&\overset{(a)} \leq {{c(t)\log c(t) + \sum_{i=1}^{c(t)}l(s_i)}   \over t} \\ 
	& \overset{(b)} \leq { {c(t)\log c(t) + c(t)\log |\mathcal{D}|} \over t} \label{Eq:Bz-2}
\end{align}
where (a) and (b) follow from the fact that the uniform distribution has maximum entropy. That is, $\sum_{i=1}^{c(t)}l(s_i) \leq c(t) \log \mathcal{D}$ and $\sum_{i=1}^{c(t)}l_p(x_i,y_i) \leq c(t) \log c(t)$. $B_Z$ is the average cost of encoding $Z$, which reflects the coding requirements of bits. In the sequel, we use Eq. (\ref{Eq:Bz-2}) instead of (\ref{Eq:Bz-1}) to scale $B_Z$.

Let $\{Z_i\}_{i=-\infty}^{\infty}$ be a strictly stationary ergodic process with finite states and $ z_i^j \triangleq (z_i,z_{i+1},\ldots,z_j)$. Due to the invariance of time, $P(Z_t|Z_{t-k}^{t-1})$ is an ergodic process, where the $k$th-order Markov approximation is used to make an approximation. We will then have
\begin{equation}
	Q_k(z_{-(k-1)},\ldots,z_0,\ldots,z_t) \triangleq P(z_{-(k-1)}^0)\prod_{j=1}^t P(z_j|z_{j-k}^{j-1}) \label{eq:markov},
\end{equation}
where $z_{-(k-1)}^0$ is the initial state. In this way, one can use the $k$-th order Markov entropy rate to estimate the entropy rate of $\{Z_i\}$. That is,
\begin{align}
	-{1 \over t} \log Q_k(Z_1,Z_2,\ldots,Z_t|Z_{-(k-1)}^0) 
	&= -{1 \over t} \log \prod_{j=t}^t P(Z_j|Z_{j-k}^{j-1}) \\
	&= -{1 \over t} \sum_{j=1}^t \log P(Z_j|Z_{j-k}^{j-1}) \\
	&\to -E \log P(Z_j|Z_{j-k}^{j-1}) \\
	&=H(Z_j|Z_{j-k}^{j-1}).
\end{align}
When $k\to \infty$, the entropy rate of the $k$th-order Markov approximation converges to the entropy rate of the original random process.

Suppose that $z_1^t$ is decomposed into $c(t)$ shapes $s_1,s_2,\ldots,s_{c(t)}$. We define $w_i$ as the $k$ bits before $s_i$, where $w_1=z_{-(k-1)}^0$. Let $c_{lw}$ denote the number of shapes whose size is $l$ and its previous state $w_i=w$, $w\in \mathcal{Z}^k$.

\begin{lemma}
	\label{lemma:11}
	For $\{Z_i\}$, the joint transition probability and shape size satisfy the following inequality
	\label{lemma:1}
	\begin{equation}
		\log Q_k (z_1,z_2,\ldots,z_t|w_1) \leq \sum_{l,w} c_{lw} \log {\alpha \over c_{lw}},
	\end{equation}
\end{lemma}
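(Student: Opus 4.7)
The plan is to mimic the argument used in the Ziv--Lempel/Wyner analysis of distinct-phrase parsings (as in Cover--Thomas Thm.~13.5.3), adapted from a one-dimensional sequence to the shape-based decomposition introduced in this paper.

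First, I would invoke the definition \eqref{eq:markov} of the $k$th-order Markov approximation to write
\begin{equation}
Q_k(z_1, z_2, \ldots, z_t \mid w_1) \;=\; \prod_{j=1}^{t} P(z_j \mid z_{j-k}^{j-1}).
\end{equation}
Since $z_1^t$ is parsed into the $c(t)$ shapes $s_1, \ldots, s_{c(t)}$ and each shape has a well-defined preceding $k$-state $w_i$, the product can be regrouped as $\prod_{i=1}^{c(t)} P_k(s_i \mid w_i)$, where $P_k(s_i \mid w_i)$ denotes the conditional probability assigned by $Q_k$ to the symbols forming $s_i$ given the context $w_i$. Taking logarithms and grouping the shapes by the pair $(l, w)$ of size and preceding $k$-state gives
\begin{equation}
\log Q_k(z_1, \ldots, z_t \mid w_1) \;=\; \sum_{l, w}\; \sum_{\substack{i\,:\,|s_i|=l,\\ w_i = w}} \log P_k(s_i \mid w).
\end{equation}

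Second, I would apply Jensen's inequality to each $(l,w)$-block of $c_{lw}$ terms, exploiting the concavity of $\log$:
\begin{equation}
\sum_{\substack{i\,:\,|s_i|=l,\\ w_i = w}} \log P_k(s_i \mid w) \;\leq\; c_{lw} \log\!\left( \frac{1}{c_{lw}} \sum_{\substack{i\,:\,|s_i|=l,\\ w_i = w}} P_k(s_i \mid w) \right),
\end{equation}
after which the lemma reduces to bounding the inner sum by $\alpha$ for every $(l,w)$. In the classical one-dimensional Lempel--Ziv setting, where all parsed phrases are distinct, this inner sum is automatically at most $1$ because the events $\{Z_1^l = s_i\}$ are disjoint conditional on the state $w$, which yields the cleanest constant $\alpha = 1$. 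In the present shape framework, however, shapes may repeat in the parsing and live in a two-dimensional layout, so the role of $\alpha$ is to absorb this repetition overhead and the planar combinatorics.

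The main obstacle is precisely this last step: identifying and bounding $\sum_{i:\,|s_i|=l,\,w_i=w} P_k(s_i \mid w)$ by $\alpha$. The natural route is to dominate the sum by $\sum_{s \in \mathcal{D},\,|s|=l} P_k(s \mid w)$, that is, the total probability under the $k$-th order Markov model of all size-$l$ elements of the shape database $\mathcal{D}$ that are compatible with context $w$; this quantity admits a combinatorial bound $\alpha$ depending only on $\mathcal{D}$ (or on the admissible number of shape configurations), not on the particular image $z_1^t$. Summing the Jensen estimate across all $(l, w)$ then assembles the claimed inequality. I expect that reconciling the two-dimensional geometry of shapes with the linear conditioning of the $k$th-order Markov model—so that the notion of "previous state $w_i$" is consistent and so that shapes can legitimately be treated as independent blocks under $Q_k$—will be the most delicate point, and will likely force a specific convention on how shapes are ordered and how their contexts are read off from the raster scan of the image.
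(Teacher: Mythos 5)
Your proposal follows essentially the same route as the paper's proof: factorize $Q_k$ over the shapes via Eq.~\eqref{eq:markov}, group the terms by the pair $(l,w)$, and apply Jensen's inequality (concavity of $\log$) to each group of $c_{lw}$ terms. The ``main obstacle'' you identify --- bounding $\sum_{i:|s_i|=l,\,w_i=w} P(s_i\mid w)$ by $\alpha$ --- is not actually resolved in the paper either; the paper simply \emph{supposes} such a constant $\alpha$ exists (and its stated supposition even carries a spurious $1/c_{lw}$ factor that is inconsistent with the final line), so your version, which makes explicit both the needed form of the bound and the distinct-phrase argument that would give $\alpha=1$ in the one-dimensional case, is if anything the more careful of the two.
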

where $\alpha$ is a constant.

\begin{proof}
	Suppose that for fixed $l$ and $w$, the sum of the transition probabilities is less than a constant $\alpha$, i.e., 
	\begin{equation}
		\label{Eq:Suppose A}
		\sum_{i:|s_i|=l,w_i=w} {1 \over c_{lw}} P(s_i|w_i) \leq \alpha.
	\end{equation}
	Then,
	\begin{align}
		\log Q_k (z_1,z_2,\ldots,z_t|w_1) &= \log Q_k(s_1,s_2,\ldots,s_c|w_1) \\
		&\overset{(a)} = \sum_{i=1}^c \log P(s_i|w_i) \\
		&= \sum_{l,w} \sum_{i:|s_i|=l,w_i=w} \log P(s_i|w_i) \\
		&= \sum_{l,w} c_{lw} \sum_{i:|s_i|=l,w_i=w} {1 \over c_{lw}} \log P(s_i|w_i) \\
		& \overset{(b)} \leq \sum_{l,w} c_{lw} \log \sum_{i:|s_i|=l,w_i=w} {1 \over c_{lw}} P(s_i|w_i) \\
		& \leq \sum_{l,w} c_{lw} \log {\alpha \over c_{lw}}
	\end{align}
	where (a) follows from Eq. (\ref{eq:markov}) and (b) follows from Jensen's inequality, thanks to the convexity of $\log x$ for $x>0$.
\end{proof}

Lemma \ref{lemma:11} links the conditional probability $Q_k(z_1,z_2,...,z_t|w_1)$ to $c_{lw}$, connecting the concepts before and after decomposing $\{ Z_i\}$. We will continue to explore the quantitative relationship between shapes and pixels.
\begin{lemma}
	For $\{Z_i\}$, the number and size of its shapes meet the following relationship
	\label{lemma:2}
	\begin{equation}
		\sum_{l,w} c_{lw} \log {c \over c_{lw}} \leq kc+(t+c)\log (1 + {c \over t}) + c \log {t \over c}
	\end{equation}
\end{lemma}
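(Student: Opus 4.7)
The plan is to interpret the left-hand side as a rescaled entropy and exploit subadditivity together with the partition constraint $\sum_{l,w} l\, c_{lw} = t$. Setting $p_{lw}=c_{lw}/c$ gives
\[
\sum_{l,w} c_{lw}\log\frac{c}{c_{lw}} \;=\; c\,H(L,W),
\]
where $L$ and $W$ denote the shape-size and $k$-bit context marginals of $p_{lw}$. Subadditivity $H(L,W)\le H(L)+H(W)$ reduces the task to bounding the two marginal entropies. For $W$, the context ranges over $\mathcal{Z}^k$, which for a binary image source yields $H(W)\le k$, hence $cH(W)\le kc$; this produces the first summand of the claimed inequality.

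The substantive work is on $cH(L)$, for which I would invoke Gibbs' inequality against a carefully tuned geometric prior. Taking $q_l=\bigl(c/(t+c)\bigr)\bigl(t/(t+c)\bigr)^{l-1}$ for $l\ge 1$ and using $D(p\|q)\ge 0$ gives
\[
cH(L)\;\le\;\sum_l c_l\log\frac{1}{q_l}\;=\;c\log\frac{t+c}{c}+(t-c)\log\frac{t+c}{t},
\]
where the second equality substitutes the moments $\sum_l c_l = c$ and $\sum_l l\,c_l = t$ (the latter being the pixel-partition identity). Enlarging the coefficient $(t-c)$ to $t$ and then invoking the elementary algebraic identity $t\log\frac{t+c}{t}+c\log\frac{t+c}{c}=(t+c)\log(1+c/t)+c\log(t/c)$ converts the marginal bound, together with the $kc$ term from $H(W)$, into exactly the right-hand side of the lemma.

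The main obstacle will be choosing the auxiliary distribution $q_l$: the geometric shape is essentially forced, because only when $\log(1/q_l)$ is affine in $l$ does the mean constraint $\sum l\,c_l=t$ collapse the Gibbs upper bound into a closed form; beyond that, the success parameter $c/(t+c)$ must be tuned precisely so the surviving constants recombine into the specific $\log(1+c/t)$ shape appearing in the statement. Once this choice is made, the remainder of the argument is routine algebraic bookkeeping, and the standard $H(L,W)\le H(L)+H(W)$ split is what allows the $k$-dependent and the $t,c$-dependent contributions to separate cleanly.
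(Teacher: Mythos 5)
Your proposal is correct and follows essentially the same route as the paper: both rewrite the left-hand side as $cH(L,W)$, split by subadditivity, bound the context entropy by $k$, and control the size entropy via a geometric distribution with the mean pinned by $\sum_l l\,c_l = t$. The only difference is cosmetic — the paper quotes the exact maximum entropy of the geometric with mean $t/c$ and then applies a monotonicity argument to $x\log x-(x-1)\log(x-1)$, whereas your Gibbs-inequality bound against the reference with success parameter $c/(t+c)$ lands on the final expression directly.
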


\begin{proof}
	For simplicity, we use $c$ to represent $c(t)$. Let $p_{lw} = \displaystyle{\frac{c_{lw}}{c}} $, then $\sum\limits_{l,w}p_{lw}=1$. 
	We define two random variables $U$ and $V$ such that
	\begin{equation}
		\Pr (U=l, V=w) = p_{lw}.
	\end{equation}
	The mean of $U$ is the average length of each shape, i.e., $E(U)=\displaystyle{\frac{t}{c}}$. A random variable with a geometric distribution has maximum entropy when the mean of a discrete random variable is fixed. Thus, we have,
	\begin{align}
		H(U) &\overset{(a)}\leq {t \over c} \log {t \over c}-({{t \over c}-1}) \log ({t \over c}-1) \\
		&\overset{(b)}\leq ({{t \over c}+1}) \log ({t \over c}+1) - {t \over c} \log {t \over c} \\
		& = \log {t \over c} + ({t \over c} + 1)\log ({c \over t} + 1),
	\end{align}
	where (a) is the entropy of a random variable with a geometric distribution and (b) follows that the function $f(x)=x\log x -(x-1)\log (x-1)$ is monotonically increasing when $x \geq 1$. On the other hand,      
	$H(V)\leq \log |\mathcal{Z}|^k = k$. Thus, 
	\begin{align}
		\sum_{l,w} c_{lw} \log {c \over c_{lw}} &= c \sum_{l,w}p_{lw} \log {1 \over p_{lw}} \\
		&= cH(U,V)  \\
		&\leq cH(U) + cH(V) \\
		&\leq c[\log {t \over c} + ({t \over c} + 1)\log ({c \over t} + 1)] + kc \\
		& = kc+(t+c)\log (1 + {c \over t})+c\log{t \over c},
	\end{align}
	which completes the proof.
\end{proof}
Based on these two lemmas, we will further analyze the condition under which the entropy rate can be reached asymptotically.

\begin{Theorem}
	\label{theorem:1}
	When the numbers of shapes and pixels meet the reciprocal relation of the logarithm, then the average encoding length will asymptotically approximate the entropy rate. That is, \\
	if
	\begin{equation}
		\label{Eq:relationship}
		{c(t) \over t} = \mathcal{O} ({1 \over {\log t}})
	\end{equation}
	then	
	\begin{equation}
		\label{Eq:result}
		\lim\limits_{t \to \infty} {l(Z_1,Z_2,...,Z_t) \over t} = H(\mathcal{Z}).
	\end{equation}	
\end{Theorem}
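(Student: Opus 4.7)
My plan is to sandwich $l(Z_1,\ldots,Z_t)/t$ between matching bounds that both converge to $H(\mathcal{Z})$. The lower bound $\liminf_t l(Z_1,\ldots,Z_t)/t \ge H(\mathcal{Z})$ is the converse part of Shannon's source coding theorem and follows from Kraft's inequality combined with the Shannon-McMillan-Breiman theorem recalled in the preliminaries, so the substantive work is the achievability direction $\limsup_t l(Z_1,\ldots,Z_t)/t \le H(\mathcal{Z})$.

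For achievability I would begin from the bound (\ref{Eq:Bz-2}), namely $l(Z_1,\ldots,Z_t)/t = B_Z \le [c(t)\log c(t) + c(t)\log|\mathcal{D}|]/t$, and concentrate on controlling the leading term $c(t)\log c(t)/t$ by chaining Lemmas \ref{lemma:11} and \ref{lemma:2}. Rewriting the right-hand side of Lemma~\ref{lemma:11} as $c\log\alpha - c\log c + \sum_{l,w} c_{lw}\log(c/c_{lw})$ isolates $c\log c$, and substituting the estimate from Lemma~\ref{lemma:2} yields
\begin{equation*}
\frac{c(t)\log c(t)}{t} \le -\frac{1}{t}\log Q_k(z_1^t \mid w_1) + \frac{c(t)\log\alpha + k c(t)}{t} + \Bigl(1+\tfrac{c(t)}{t}\Bigr)\log\Bigl(1+\tfrac{c(t)}{t}\Bigr) + \frac{c(t)}{t}\log\frac{t}{c(t)}.
\end{equation*}
Letting $t\to\infty$ with $k$ fixed, the first term converges to the $k$-th order Markov entropy $H(Z_j\mid Z_{j-k}^{j-1})$ by the computation carried out immediately before Lemma~\ref{lemma:11}. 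The hypothesis $c(t)/t = \mathcal{O}(1/\log t)$ forces $c(t)/t \to 0$, which makes the three remaining correction terms vanish; the crucial step is $\frac{c(t)}{t}\log\frac{t}{c(t)} = -\frac{c(t)}{t}\log\frac{c(t)}{t} \to 0$, which follows from $\lim_{x\to 0^+} x\log(1/x)=0$. This yields $\limsup_t c(t)\log c(t)/t \le H(Z_j\mid Z_{j-k}^{j-1})$, after which sending $k\to\infty$ gives the desired bound because $H(Z_j\mid Z_{j-k}^{j-1})\downarrow H(\mathcal{Z})$ for a stationary process.

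The main obstacles I anticipate are twofold. First, the residual term $c(t)\log|\mathcal{D}|/t$ in (\ref{Eq:Bz-2}) still needs to be controlled; this is where the precise rate $\mathcal{O}(1/\log t)$, rather than merely $c(t)/t\to 0$, is essential, since $\log|\mathcal{D}|$ can grow like $\log t$ when the shape dictionary is populated from up to $t$ distinct patterns, and $c(t)\log t/t$ is exactly the quantity that the hypothesis keeps bounded. Second, the two limits $t\to\infty$ and $k\to\infty$ must be taken in a compatible way: the cleanest route is to pick $k=k(t)$ growing slowly enough that $k(t)c(t)/t\to 0$ while still $H(Z_j\mid Z_{j-k(t)}^{j-1})\to H(\mathcal{Z})$, which is possible because $H_k$ decreases monotonically to the entropy rate for any stationary source.
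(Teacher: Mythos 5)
Your proposal follows essentially the same route as the paper's own proof: it chains Lemma \ref{lemma:11} and Lemma \ref{lemma:2} to isolate $c(t)\log c(t)/t$, bounds it by $-\frac{1}{t}\log Q_k$ plus correction terms that vanish under $c(t)/t=\mathcal{O}(1/\log t)$, and then sends $k\to\infty$. You are in fact somewhat more careful than the paper on the converse direction, on the residual term $c(t)\log|\mathcal{D}|/t$, and on the compatibility of the limits in $t$ and $k$, all of which the paper passes over without comment.
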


\begin{proof}
	From Lemma \ref{lemma:1}, one can write
	\begin{align}
		\log Q_k (z_1,z_2,...,z_t|w_1) & \leq \sum_{l,w} c_{lw} \log {\alpha \over c_{lw}}  \\
		&= -\sum_{l,w}c_{lw}\log {{c \cdot c_{lw}} \over {c\cdot \alpha}} \\
		&= -c \log c-\sum_{l,w} c_{lw} \log {c_{lw} \over {c\alpha}}
	\end{align}
	
	For simplicity, we use $Q$ to represent $Q_k (z_1,z_2,...,z_t|w_1)$. Thus,
	\begin{align}
		{c \over t} \log c \leq -{1 \over t} \log Q - {1 \over t} \sum_{l,w} c_{lw} \log {c_{lw} \over c \alpha} \label{Eq:middle}
	\end{align}	
	
	From Lemma \ref{lemma:2}, it follows that
	\begin{align}
		-{1 \over t} \sum_{l,w} c_{lw} \log {c_{lw} \over c \alpha} &= {1 \over t} \sum_{l,w} c_{lw}\log {c \over c_{lw}} + {c \over t} \log \alpha\\
		&\leq {1 \over t} [kc+(t+c)\log (1 + {c \over t})+c\log{t \over c}] + {c \over t} \log \alpha \\
		&={c \over t}(k+\log \alpha) + {c \over t } \log {t \over c} + (1 + {c \over t})\log (1 + {c \over t})  \label{Eq:three component}.
	\end{align}
	
	When ${c \over t} = \mathcal{O} ({1 \over {\log t}})$ and $t \to \infty$, the three terms in the right-hand side of Eq. (\ref{Eq:three component}) will all tend to 0. Combining Eq. (\ref{Eq:middle}) and (\ref{Eq:three component}), we obtain
	\begin{equation}
		-{1 \over t} \sum_{l,w} c_{lw} \log {c_{lw} \over c \alpha} \to 0,~~\textit{when}~t\to \infty.
	\end{equation}
	
	Then,
	\begin{align}
		\lim\limits_{t\to \infty}  \sup {{c(t) \log c }\over t} 
		&\leq \lim\limits_{t \to \infty} {-{1 \over t}\log Q_k(Z_1,Z_2,...,Z_t|Z_{-(k-1)}^0}) \\
		&\to H(\mathcal{Z}).
	\end{align}
	
	The asymptotic property of the second term in the right-hand side of Eq. (\ref{Eq:Bz-2}),
	\begin{align}
		\lim\limits_{t \to \infty} {{c(t) \log |\mathcal{D}|} \over t} = 0.
	\end{align}
	
	Thus, 
	\begin{align}
		\lim\limits_{t \to \infty} {l(Z_1,Z_2,...,Z_t) \over t} &= \lim\limits_{t \to \infty} ({c(t)\log c \over t} + {c(t) \log |\mathcal{D}| \over t}) \\
		&= \lim\limits_{t \to \infty} {c(t)\log c \over t} \\
		&= H(\mathcal{Z}).
	\end{align}
	
	This shows that when $c(t)$ and $t$ meet the condition in Eq. (\ref{Eq:relationship}), the average coding length of $\{Z_i\}$ will asymptotically approximate the entropy rate $H(\mathcal{Z})$.
\end{proof}
Theorem \ref{theorem:1} sets up a bridge between the shapes and the entropy rate for image sources with ergodic properties. It theoretically indicates what order of magnitude we should have the shapes and pixels. When one encodes images with shapes, the average cost will asymptotically tend to the entropy rate if the numbers of shapes and pixels satisfy the reciprocal relation of the logarithm. Moreover, it gives new insights into designing image compression algorithms in theory. 

\section{Numerical Analysis}
\label{sec:numerical}

\begin{table}
	\renewcommand\arraystretch{1.2}
	\caption{The numerical analysis of shapes and pixels on MNIST dataset ($R_{avg}$ is the average compression ratio)}
	\setlength{\tabcolsep}{7pt}
	\small
	\begin{center}
		\begin{tabular}{ccccccccccc}
			\toprule
			Class& 0    & 1    & 2    & 3    & 4   & 5 & 6    & 7    & 8    & 9\\
			\midrule
			$R_{avg}$ & 2.84 & 6.02 & 3.17 & 3.20 & 3.77 & 3.40 & 3.20 & 4.05 & 2.81 & 3.52   \\
			${1 \over t}{c(t)\log t}$        & \textbf{0.200} & \textbf{0.080} & \textbf{0.178} & \textbf{0.175} & \textbf{0.149} & \textbf{0.163} & \textbf{0.175} & \textbf{0.136} & \textbf{0.202} & \textbf{0.157}    \\
			\bottomrule
		\end{tabular}
	\end{center}
	\label{Tab:analysis-1}
\end{table}

Section \ref{sec:asymptotic} points out the asymptotic property of encoding methods based on shapes. When ${c(t) \over t} \to \mathcal{O}({1 \over \log t})$, the average encoding length will asymptotically approximate the entropy rate. It indicates the relationship between the shape-pixel number ratio and coding performance. In this section, we present some numerical results to illustrate that for each ergodic process of an image source, if ${c(t) \over t} \to  \mathcal{O}({1 \over \log t})$ as $t \to \infty$, one can get the result of Eq. (\ref{Eq:result}).

Table \ref{Tab:analysis-1} reveals the numerical results of MNIST datasets. It includes encoding results $R_{avg}$ and ${1 \over t}c(t)\log t$ in ten categories with the soft compression algorithm \cite{xin2020soft}. What can be clearly seen in this table is that ${1 \over t}c(t)\log t <1$ for all classes. It is on the order of $\mathcal{O}(1)$, which is consistent with the assumption in Theorem \ref{theorem:1}.

Now we focus on simulated images as an alternative analysis. We use the birth and death processes of two states to simulate a stationary ergodic process. For each case, 5000 $\{Z_i\}$ with $M=100,N=100$ are generated, respectively. We encode $\{ Z_i\}$ with fixed size shapes and observe the effect of ${c(t) \over t}$ on coding performance. 

Fig. \ref{Fig:curve} illustrates the shape coding working mechanism of the image source. It indicates the performance of the encoding method with shapes, in bits per pixel (bpp). Cases 1-5 represent different parameters of the infinitesimal generator matrix of the birth-death process, illustrating the relationship between coding performance and {${c(t)} \over t$}. In different cases, the changing trend of these curves is the same. The bpp decreases with the increase of shape size (i.e., the shape-pixel number ratio decreases), which reflects the gain brought by shape. Moreover, as the shape-pixel number ratio continues to decrease, bpp enters the smoothing region. It also shows that the reduction of the number ratio will not always improve the encoding performance. This is due to the fineness of the model itself, which does not take advantage of the additional statistical information of larger shapes. Note that, the numerical difference between the curves is essentially the difference in the entropy rate.

\begin{figure}
	\centerline{\includegraphics[width=3.5in]{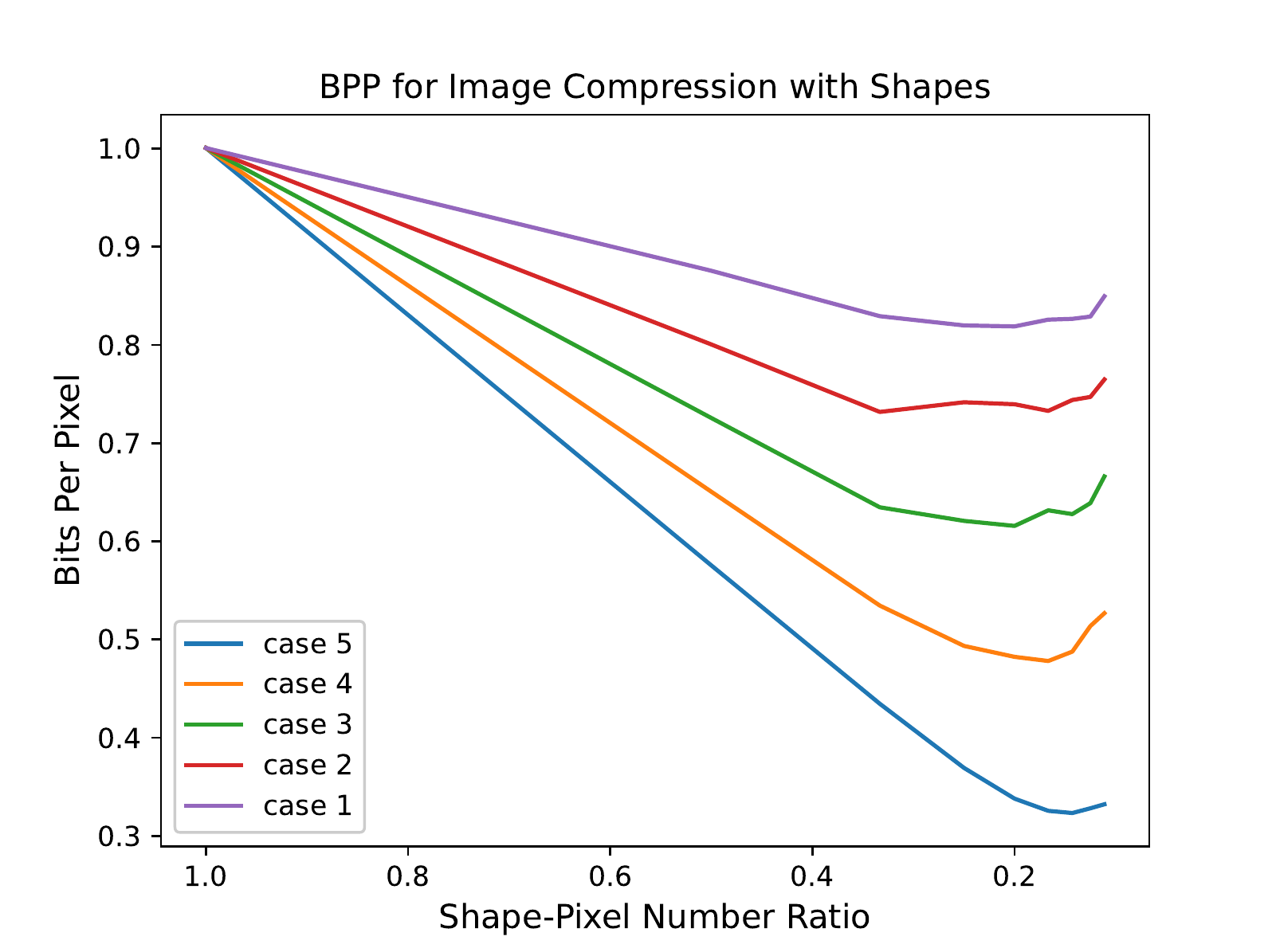}}
	\caption{The performance of encoding method with shapes, in bits per pixel (bpp).}
	\label{Fig:curve}
\end{figure}

\section{Concluding Remarks}
\label{sec:conclusion}
In this paper, we investigated the performance limit of shape-based image compression. Our works answered the open problem of the relationship between image decomposition and lossless compression, which reflects the performance variation in general. Specifically, when the numbers of shapes and pixels have a reciprocal relation to the logarithm, the average code length will asymptotically approach the entropy rate.

For image coding algorithms, one should give full attention to the superiority of shapes in image processing. Likewise, it is necessary to take advantage of the characteristics of the image dataset. Through shapes and data-driven methods, one can use the high-dimensional information of images to help code. Moreover, the asymptotic analysis of the entropy rate can also be extended to gray images and multi-component images with some adjustments.

Finally, it should be noted that this paper focuses on the source part, without considering the natural robustness of images in the communication process. In future work, we will explore the theory of joint source-channel image coding in the finite block length regime. It is noted that image lossless compression, especially, soft compression, may become an important block for semantic information communications, and even play some roles in the new developments of metaverse-type services in the future.

\bibliographystyle{IEEEtran}
\bibliography{reference}

\vspace{-33pt}

\newpage

\vfill

\end{document}